\definecolor{webgreen}{rgb}{0,.5,0}
\definecolor{webbrown}{rgb}{.6,0,0}
\newcommand{\seqnum}[1]{\href{https://oeis.org/#1}{\underline{#1}}}
\def\modd#1 #2{#1\ \mbox{\rm (mod}\ #2\mbox{\rm )}}
\def\shuff{\, \sha \,}
\DeclareMathOperator{\epp}{epp}
\DeclareMathOperator{\opp}{opp}
\title{Borders, Palindrome Prefixes, and Square Prefixes}
\author{Daniel Gabric and Jeffrey Shallit\footnote{Research supported by NSERC under grant 2018-04118.} \\
School of Computer Science \\
University of Waterloo \\
Waterloo, ON  N2L 3G1 \\
Canada\\
{\tt dgabric@uwaterloo.ca} \\
{\tt shallit@uwaterloo.ca} }
\begin{document}

\maketitle

\theoremstyle{plain}
\newtheorem{theorem}{Theorem}
\newtheorem{corollary}[theorem]{Corollary}
\newtheorem{lemma}[theorem]{Lemma}
\newtheorem{proposition}[theorem]{Proposition}

\theoremstyle{definition}
\newtheorem{definition}[theorem]{Definition}
\newtheorem{example}[theorem]{Example}
\newtheorem{conjecture}[theorem]{Conjecture}

\theoremstyle{remark}
\newtheorem{remark}[theorem]{Remark}

\begin{abstract}
We show that the number of length-$n$ words over a $k$-letter alphabet
having no even palindromic prefix is the same as the number of length-$n$
unbordered words, by constructing an explicit bijection between
the two sets.  A slightly different but analogous
result holds for those words having no odd palindromic prefix.
Using known
results on borders, we get an asymptotic enumeration for the 
number of words
having no even (resp., odd) palindromic prefix .
We obtain an analogous result for words having no nontrivial
palindromic prefix.
Finally, we obtain
similar results for words having no square prefix, thus proving a 
2013 conjecture of Chaffin, Linderman, Sloane, and Wilks.
\end{abstract}

\section{Introduction}

In this note, we work with finite words over a finite alphabet $\Sigma$.
For reasons that will be clear later, we assume without loss of generality
that $\Sigma = \Sigma_k = \{ 0,1,\ldots, k-1 \}$ for some integer $k \geq 1$.

We index words starting at position $1$, so that
$w[1]$ denotes the first symbol of $w$, and $w[i..j]$ is the factor
beginning at position $i$ and ending at position $j$.

We let $w^R$ denote the {\it reverse} of a word; thus, for example,
$ ({\tt drawer})^R = \tt reward$.   A word $w$ is a {\it palindrome}
if $w = w^R$; an example in English is the word
{\tt radar}.   A palindrome is
{\it even} if it is of even length, and {\it odd} otherwise.  
If a palindrome is of length $n$, then its {\it order} is defined
to be $\lfloor n/2 \rfloor$.  A palindrome is {\it trivial} if it is
of length $\leq 1$, and {\it nontrivial} otherwise.  We are not interested
in trivial palindromic prefixes in this paper, since every nonempty word
has such prefixes of length $0$ and $1$.

A word $w$ has an {\it even palindromic prefix} 
(resp., {\it odd palindromic prefix}) if there is
some nonempty prefix $p$ (possibly equal to $w$) that is a palindrome of
even (resp., odd) length.  Thus, for example, the English word 
{\tt diffident} has the even palindromic prefix {\tt diffid} of order $3$,
and the English word {\tt selfless} has an odd palindromic prefix
{\tt selfles} of order $3$.

A {\it border} of a word $w$ is a word $u$, $0 < |u| < |w|$,
that is both a prefix and
suffix of $w$.  If a word has a border, we say it is {\it bordered}, and
otherwise it is {\it unbordered}.  For example, {\tt alfalfa} is
bordered, but {\tt chickpea} is unbordered.
Bordered and unbordered words have been studied for almost fifty years;
see, for example, \cite{Silberger:1971,Ehrenfeucht&Silberger:1979}.

Call a border $u$ of a word $w$ {\it long} if $|u|> |w|/2$ and
{\it short} otherwise.  
If a word has a long border $u$, then by considering the overlap of the
two occurrences of $u$, one as prefix and one as suffix, we see that
$w$ also
has a short border.  Given a word $w$, its set of short border
lengths is 
$\{ 1 \leq i \leq |w|/2 \ : \ w[1..i] \text{ is a border of } w\}$.

By explicit counting for small $n$,
one quickly arrives at the conjecture that $u_k (n)$,
the {\it number\/}
of length-$n$ words over $\Sigma_k$
that are unbordered,
equals $v_k (n)$, the {\it number\/}
of length-$n$ words over $\Sigma_k$ having no
even palindromic prefix.   This seems to be true,
despite the fact that the individual words being
counted differ in the two cases.  As an example
consider $0011$, which has an even palindromic prefix $00$ but is
unbordered.    Similarly, if $t_k (n)$ denotes the number of
length-$n$ words over $\Sigma_k$ having no nontrivial
odd palindromic prefix,
it is natural to conjecture that $t_k (n) = u_k (n)$ for $n$ odd, and
$t_k (n) = k u_k (n-1)$ for $n$ even.  The first few terms of
the sequences $(t_2(n))$ and $(v_2 (n))$ are given in the following
table.

\begin{table}[H]
\begin{center}
\begin{tabular}{c|cccccccccccc}
$n$ & 1 & 2 & 3 & 4 & 5 & 6 & 7 & 8 & 9 & 10 & 11 & 12 \\
\hline
$t_2 (n)$ &  2&   4&   4&   8&  12&  24&  40&  80& 148& 296& 568&1136\\
$v_2 (n) = u_2(n)$ &   2&   2&   4&   6&  12&  20&  40&  74& 148& 284& 568&1116\\
\end{tabular}
\end{center}
\end{table}
The sequence $t_2 (n)$ is sequence
\seqnum{A308528} in the On-Line Encyclopedia of
Integer Sequences (OEIS) \cite{oeis}, and 
the sequence $u_2 (n)$ is sequence
\seqnum{A003000}.

In fact, even more seems to be true:  if $S$ is
any set of positive integers,
then the number of length-$n$ words $w$
for which $S$ specifies the lengths of all the short borders of $w$ is 
exactly the same as the number of length-$n$ words having even palindromic
prefixes with orders given by $S$.   
A similar, but slightly different claim seems to hold
for the odd palindromic
prefixes.  How can we explain this?

The obvious attempts at a bijection
(e.g., map $uvu$ to $u u^R v$) don't work, because (for example)
$00100$ and $00010$ both map to $00001$.  Nevertheless, there is a
bijection, as we will see below, and this bijection provides even more
information.

\section{A bijection on $\Sigma^n$}
\label{secf}

The {\it perfect shuffle} of two words $x$ and $y$, both
of length $n$, is written
$x \shuff y$, and is defined as follows:  if $x = a_1 a_2 \cdots a_n$
and $y = b_1 b_2 \cdots b_n$, then
$$ {x \shuff y} = a_1 b_1 a_2 b_2 \cdots a_n b_n.$$
Thus, for example, ${\tt clip} \shuff {\tt aloe}$ = {\tt calliope}.

Clearly $(x \shuff y)^R = y^R \shuff x^R$, a fact we use below.

\begin{lemma}
Let $x$ be an even-length word, and (uniquely)
write $x = y \shuff z$ for words $y, z$ with $|y| = |z|$.
Then $x$ is a palindrome iff $z = y^R$.
\label{one}
\end{lemma}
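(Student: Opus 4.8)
The plan is to reduce the palindrome condition $x = x^R$ to a statement about the two shuffle components, using the reversal identity $(x \shuff y)^R = y^R \shuff x^R$ recorded just before the lemma. First I would apply that identity to the decomposition $x = y \shuff z$, obtaining $x^R = (y \shuff z)^R = z^R \shuff y^R$. Consequently $x$ is a palindrome if and only if $y \shuff z = z^R \shuff y^R$.

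The next step is to invoke the uniqueness of the shuffle decomposition of an even-length word, which the lemma statement already presupposes. Concretely, the symbols in the odd positions of $p \shuff q$ recover $p$, and those in the even positions recover $q$; hence two shuffles $p \shuff q$ and $p' \shuff q'$ of equal-length words coincide exactly when $p = p'$ and $q = q'$. Applying this to the equation $y \shuff z = z^R \shuff y^R$ yields the pair of conditions $y = z^R$ and $z = y^R$.

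Finally I would observe that these two conditions are equivalent to each other, and to the single condition in the statement: reversing $y = z^R$ and using $(z^R)^R = z$ gives $z = y^R$, and conversely reversing $z = y^R$ gives $y = z^R$. Thus the palindrome condition collapses to $z = y^R$, as claimed. The argument is essentially bookkeeping, and I do not expect any step to present a genuine obstacle; the only point deserving explicit care is the uniqueness of the decomposition (odd positions give $y$, even positions give $z$), since it is precisely this fact that licenses the passage from equality of the two shuffles to equality of their components.
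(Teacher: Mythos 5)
Your proof is correct, and it takes a genuinely different route from the paper's. The paper proves each direction by explicit construction: it writes the even-length palindrome as $x = tt^R$, unshuffles $t$ as $(t_1 \shuff t_2)a$ (with a case split on whether $|t|$ is even or odd), and computes $x = (t_1 a t_2^R) \shuff (t_2 a t_1^R)$ to read off $z = y^R$; the converse is handled by a symmetric computation starting from $y = y_1 a y_2$. You instead work directly from the definition of palindrome: applying the reversal identity $(y \shuff z)^R = z^R \shuff y^R$ and the uniqueness of the shuffle decomposition (odd positions determine the first component, even positions the second), you get the chain of equivalences $x = x^R \iff y \shuff z = z^R \shuff y^R \iff (y = z^R \text{ and } z = y^R) \iff z = y^R$. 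Your argument is shorter, handles both directions at once, and avoids the parity case analysis entirely; its only obligations are the reversal identity (stated in the paper just before the lemma) and the component-wise uniqueness, both of which you correctly flag and justify. What the paper's construction buys is an explicit exhibition of the half-word $t$ with $x = tt^R$ and of how the halves of $y$ and $z$ interleave --- a computation style that is reused in the proof of Theorem~\ref{two} --- but for the lemma itself your streamlined argument fully suffices.
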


\begin{proof}
Suppose $x$ is palindrome.  Then $x = t t^R$ for some word $t$.

By ``unshuffling'', write
$t$ as $(t_1 \shuff t_2) a$, for words $t_1$ and $t_2$,
where $a$ is either empty or
a single letter, depending on whether $|t|$ is even or odd.
Then 
$$x = t t^R =  (t_1 \shuff t_2) a a (t_1 \shuff t_2)^R =
(t_1 \shuff t_2)aa (t_2^R \shuff t_1^R) = (t_1 a t_2^R)\shuff (t_2 a t_1^R).$$
It follows that $y = t_1 a t_2^R$ and $z = t_2 a t_1^R$,
and hence $z = y^R$.

Similarly, suppose $z = y^R$.   Write
$y = y_1 a y_2$, where $y_1, y_2$ are words of equal length,
and $a$ is either empty or a single letter, depending on whether
$|y|$ is even or odd.   Then $z = y_2^R a y_1^R$.   Hence 
$$x = (y_1 a y_2) \shuff (y_2^R a y_1^R) =
(y_1 \shuff y_2^R) aa (y_2 \shuff y_1^R).$$
Letting $t = (y_1 \shuff y_2^R)a$, we  see that
$x = t t^R$, and so $x$ is a palindrome.
\end{proof}

For a related result, see \cite{Rampersad&Shallit&Wang:2011}.

We now define a certain map from $\Sigma^n$ to $\Sigma^n$, as follows:
$$ f(x) := (y \shuff z^R) a$$
if $x = yaz$ with $|y| = |z|$ and $a$ empty or a single letter
(depending on whether $|x|$ is even or odd).
Thus, for example, $f({\tt preserve}) = {\tt perverse}$
and $f({\tt cider}) = {\tt cried}$.   Clearly this map is a bijection.

\begin{theorem}
Let $w$ be a word and let $1 \leq i \leq |w|/2$.
Then $w$ has a border of length $i$ iff
$f(w)$ has an even palindromic prefix of order $i$.
\label{two}
\end{theorem}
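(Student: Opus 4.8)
The plan is to reduce the whole statement to a direct computation of the length-$2i$ prefix of $f(w)$, followed by a single application of Lemma~\ref{one}. Set $n = |w|$ and write $w = yaz$ exactly as in the definition of $f$, where $|y| = |z| = m = \lfloor n/2 \rfloor$ and $a$ is empty or a single letter according to the parity of $n$. Since $i \leq |w|/2$ forces $i \leq m$, the trailing letter $a$ (if present) sits at position $n$ of $f(w)$, which is strictly beyond position $2i$; hence it plays no role, and the two parities of $n$ can be handled uniformly.

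First I would record the elementary fact that taking a prefix commutes with the perfect shuffle: for words $p,q$ of length $m$ and any $i \leq m$, one has $(p \shuff q)[1..i \cdot 2] = (p[1..i]) \shuff (q[1..i])$, since the interleaving $p_1 q_1 p_2 q_2 \cdots$ truncated after $2i$ symbols is just $p_1 q_1 \cdots p_i q_i$. Applying this to $f(w) = (y \shuff z^R)a$ yields
$$ f(w)[1..2i] = (y[1..i]) \shuff (z^R[1..i]). $$
Because $z^R[1..i]$ consists of the last $i$ letters of $z$ read in reverse, I would rewrite it as $z^R[1..i] = (z[m-i+1..m])^R$.

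Next I would apply Lemma~\ref{one} to the even-length word $f(w)[1..2i]$, viewed as the shuffle above with $Y = y[1..i]$ and $Z = z^R[1..i]$. The lemma gives that $f(w)[1..2i]$ is a palindrome (i.e., $f(w)$ has an even palindromic prefix of order $i$) if and only if $Z = Y^R$, that is, $(z[m-i+1..m])^R = (y[1..i])^R$, which is equivalent to $z[m-i+1..m] = y[1..i]$. Finally I would translate this condition back to $w$ itself: since $y$ is a prefix of $w$ we have $y[1..i] = w[1..i]$, and since $z$ is a suffix of $w$ its last $i$ letters are the last $i$ letters of $w$, so $z[m-i+1..m] = w[n-i+1..n]$. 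The palindrome condition thus reads $w[1..i] = w[n-i+1..n]$, which is precisely the assertion that $w$ has a border of length $i$, and both directions of the equivalence follow at once.

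The only step that demands genuine care is the index bookkeeping across the reversal $z \mapsto z^R$, together with correctly peeling the length-$2i$ prefix out of the shuffle; everything afterward is a mechanical chain of equivalences. I expect no real obstacle beyond keeping the subscripts straight, and I would double-check the reversal identity $z^R[1..i] = (z[m-i+1..m])^R$ on a small example (such as the worked instances $f(\mathtt{preserve})$ and $f(\mathtt{cider})$ already given) before committing to it.
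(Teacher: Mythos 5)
Your proof is correct and takes essentially the same route as the paper: both arguments identify the length-$2i$ prefix of $f(w)$ as the shuffle $w[1..i] \shuff (w[n-i+1..n])^R$ and then invoke Lemma~\ref{one} to convert palindromicity into the border condition. The only difference is organizational: the paper handles the two directions separately (decomposing $w = uvu$ for one direction and $w = yaz$ for the other), whereas you run a single chain of equivalences, which is a mild streamlining rather than a different method.
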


Roughly speaking, this theorem says that $f$ ``maps borders to orders''.

\begin{proof}
Suppose $w$ has a border of length $i$.  Then $w = uvu$, where
$|u| = i$.   Write $v = v_1 a v_2$, where $|v_1| = |v_2|$ and $a$
is either empty, or a single letter, depending on whether $|v|$
is even or odd.  Then $$f(w) = f(u v_1 a v_2 u) = ((u v_1)\shuff(v_2 u)^R ) a
= (u \shuff u^R) (v_1 \shuff v_2^R) a,$$
which by Lemma~\ref{one} has a palindromic prefix of length $2i$ and
order $i$.

Suppose $f(w)$ has an even palindromic prefix of order $i$.
Write $w = yaz$, so that $f(w) = (y \shuff z^R) a$.   
Write $y = y_1 y_2$ and $z = z_1 z_2$ such that
$|y_1| = |z_2| = i$.
Now $$f(w) = ((y_1 y_2) \shuff (z_2^R z_1^R)) a =
(y_1 \shuff z_2^R) (y_2 \shuff z_1^R) a .$$
It follows that $y_1 \shuff z_2^R$ is a palindrome and
thus $z_2 = y_1$ by Lemma~\ref{one}.
Hence $w = yaz = y_1 y_2 a z_1 z_2 = y_1 y_2 a z_1 y_1$
has a length-$i$ border, namely $y_1$.  
\end{proof}

\begin{corollary}
Let $S \subseteq \{1, \ldots, \lfloor n/2 \rfloor\}$.  Then
the number of length-$n$ words whose short borders
are exactly those in $S$ equals the number of length-$n$
words whose even palindromic prefixes are of orders
exactly those in $S$.

In particular, this holds for $S = \emptyset$, so
the equality $u_k (n) = v_k (n)$ holds for all $k \geq 1$ and $n \geq 1$:
the
number of length-$n$ words that are unbordered is the same as
the number of length-$n$ words having no even palindromic prefix.
\end{corollary}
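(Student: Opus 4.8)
The plan is to read off the Corollary as a direct consequence of Theorem~\ref{two} together with the fact that $f$ is a bijection on $\Sigma^n$; almost all of the substantive work has already been done, and what remains is bookkeeping about the ranges of indices involved.

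First I would fix a length-$n$ word $w$ and apply Theorem~\ref{two} for every $i$ with $1 \leq i \leq \lfloor n/2 \rfloor$. For each such $i$, the theorem says $w$ has a border of length $i$ if and only if $f(w)$ has an even palindromic prefix of order $i$. Since an even palindromic prefix of order $i$ has length $2i \leq |f(w)| = n$, its order automatically lies in $\{1, \ldots, \lfloor n/2 \rfloor\}$, and the short border lengths of $w$ lie in this same range by definition. Consequently the set of short border lengths of $w$ coincides exactly with the set of orders of even palindromic prefixes of $f(w)$.

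Next, since $f$ is a bijection on $\Sigma^n$, and since the previous step shows it sends any $w$ with short border set $S$ to a word $f(w)$ whose even palindromic prefix orders are exactly $S$ (and conversely), $f$ restricts to a bijection between the two families indexed by each fixed $S \subseteq \{1, \ldots, \lfloor n/2 \rfloor\}$; this is the first assertion. For the case $S = \emptyset$, the words counted on the border side are exactly the unbordered words, because a word lacking a short border also lacks any long border (a long border forces a short one, by the overlap argument of the introduction), while on the palindrome side $S = \emptyset$ means no even palindromic prefix; thus $u_k(n) = v_k(n)$.

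The only point requiring care is this matching of index ranges together with the long-versus-short border reduction in the $S = \emptyset$ case; the genuine difficulty is entirely absorbed into Theorem~\ref{two}, which is where the nontrivial structure resides.
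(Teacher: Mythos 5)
Your proposal is correct and follows exactly the paper's route: the paper proves this corollary in one line by observing that Theorem~\ref{two} makes $f$ a bijection between the two families, which is precisely your argument with the bookkeeping (index ranges, and the reduction of ``unbordered'' to ``no short border'') written out explicitly.
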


\begin{proof}
As we have seen in Theorem~\ref{two},
the map $f$ is a bijection from the
the first set to the second.
\end{proof}

\begin{example}
As an example, consider the length-$8$ binary words
with short borders of length $1$ and $3$ only.  There
are 8 of them:
$$ \{ 01000010,  01001010,  01010010,
01011010,  10100101,  10101101,  10110101,  10111101 \}.$$
By applying the map $f$ to each
word, we get the length-$8$ binary
words having even palindromic prefixes of orders $1$ and
$3$ only:
$$ \{ 00110000,  00110001,  00110010,  00110011,
11001100,  11001101,  11001110,  11001111 \}.$$
\end{example}

Let $\epp_{k,S} (n)$ denote the number of length-$n$ words over
a $k$-letter alphabet having even palindromic prefixes of
order $i$ for each $i \in S$, and no other orders.

\begin{proposition}
We have $\epp_{k,S} (n+1) = k \cdot \epp_{k,S} (n)$ for $n$ even.
\label{evenprop}
\end{proposition}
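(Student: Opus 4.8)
The plan is to exhibit an explicit bijection between the length-$(n+1)$ words counted by $\epp_{k,S}(n+1)$ and the Cartesian product of the length-$n$ words counted by $\epp_{k,S}(n)$ with the alphabet $\Sigma_k$; the factor of $k$ then comes directly from $|\Sigma_k| = k$. First I would recast $\epp_{k,S}(n)$ purely in terms of palindromic prefixes: a word $w$ has an even palindromic prefix of order $i$ precisely when $w[1..2i]$ is a palindrome, so $\epp_{k,S}(n)$ counts those $w$ of length $n$ for which the set $\{\, i : 1 \le i \le \lfloor n/2 \rfloor,\ w[1..2i] = (w[1..2i])^R \,\}$ equals $S$. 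Note that since $n$ is even, $\lfloor n/2 \rfloor = \lfloor (n+1)/2 \rfloor = n/2$, so the set of admissible orders, namely $\{1, \ldots, n/2\}$, is the same for length $n$ and length $n+1$, and the equation is meaningful.

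The bijection itself is simply truncation of the last symbol: send a length-$(n+1)$ word $w$ to the pair $(w[1..n],\, w[n+1])$. The heart of the argument, and the only step requiring care, is to verify that this truncation preserves the set of even palindromic prefix orders, that is, that $w$ and $w[1..n]$ have exactly the same such set whenever $n$ is even. To see this, observe that the condition ``$w[1..2i]$ is a palindrome'' depends only on positions $1$ through $2i$, and for every admissible order $i \le n/2$ we have $2i \le n$; hence the symbol $w[n+1]$ is never involved, and the orders of $w$ and of $w[1..n]$ coincide. Equivalently, any new even palindromic prefix produced by appending $w[n+1]$ would have to end at position $n+1$, forcing $2i = n+1$, which is impossible since $n+1$ is odd.

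With this in hand, the map is clearly a bijection: it is injective because $w$ is recovered from the pair, and surjective because any length-$n$ word with orders $S$, followed by any of the $k$ letters, yields a length-$(n+1)$ word with the same orders $S$ by the preceding paragraph. Counting preimages then gives $\epp_{k,S}(n+1) = k \cdot \epp_{k,S}(n)$. I expect no genuine obstacle here; the single point to state carefully is the parity fact that an even palindromic prefix can never have its right endpoint at the odd position $n+1$, which is exactly what leaves the last letter free.
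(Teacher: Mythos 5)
Your proof is correct and is essentially the paper's own argument: the paper appends an arbitrary letter $a$ to a length-$n$ word (your truncation map, read in the other direction) and observes that the even palindromic prefixes are unchanged, which is exactly your parity point that a new even palindromic prefix would have to end at the odd position $n+1$. You have merely spelled out the parity justification that the paper dismisses with ``clearly.''
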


\begin{proof}
Let $n$ be even.
Let $w$ be a word over a $k$-letter alphabet with even palindromic
prefix orders given by $S$, and let $a$ be a single letter.
Then clearly $wa$ has exactly the same palindromic prefixes as $w$.
Since $a$ is arbitrary, the result follows.
\end{proof}

\section{Odd palindromic prefixes}

Let $S$ be any subset of $\{ 1,2,\ldots, \lfloor n/2 \rfloor \}$.
Let $\opp_{k,S} (n)$ denote the number of length-$n$ words over
a $k$-letter alphabet having odd palindromic prefixes of
order $i$ for each $i \in S$, and no others.  Notice that we are not
concerned here with those words having the trivial odd palindromic prefix
of a single letter.
\begin{proposition}
We have $\opp_{k,S} (n+1) = k \cdot \opp_{k,S} (n)$ for $n$ odd.
\label{prop5}
\end{proposition}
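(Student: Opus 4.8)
The plan is to mirror the argument used for the even case in Proposition~\ref{evenprop}, exploiting the fact that odd palindromes have odd length while the parity of $n$ is now odd, so that $n+1$ is even. First I would fix an odd $n$ and a single letter $a$, and consider the operation $w \mapsto wa$ sending a length-$n$ word to a length-$(n+1)$ word. The crucial observation is that every nonempty prefix of $wa$ of length at most $n$ is already a prefix of $w$, and the only genuinely new prefix, namely $wa$ itself, has even length $n+1$ and therefore cannot be an odd palindrome. Hence $wa$ has exactly the same odd palindromic prefixes, and the same orders, as $w$.

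I would then argue that $w \mapsto wa$, as $a$ ranges over the $k$ letters of the alphabet, is a bijection from the set of length-$n$ words whose odd palindromic prefix orders are exactly $S$ (paired with the alphabet) onto the set of length-$(n+1)$ words whose odd palindromic prefix orders are exactly $S$. The forward direction is the observation above. For the converse, given a length-$(n+1)$ word $w'$ with odd palindromic prefix orders $S$, I would delete its last letter and note, by the same parity argument, that the resulting length-$n$ word has an identical set of odd palindromic prefixes: the deleted prefix $w'$ was of even length and so irrelevant, and all shorter prefixes are unchanged. Counting over the $k$ choices of $a$ then yields $\opp_{k,S}(n+1) = k \cdot \opp_{k,S}(n)$.

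The only real subtlety, and the step I expect to require the most care, is verifying that appending $a$ neither creates nor destroys an odd palindromic prefix. This is precisely where the hypothesis that $n$ is odd is used: were $n$ even, the new prefix $wa$ would have odd length $n+1$ and could itself be an odd palindrome, breaking the correspondence, which explains why the analogous statement for $\epp$ in Proposition~\ref{evenprop} instead requires $n$ even. I would also take a moment to confirm that the trivial length-$1$ odd palindromic prefix, which we deliberately exclude, plays no role here, since it is present for every nonempty word and thus never contributes to the set $S$.
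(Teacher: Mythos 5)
Your proposal is correct and is essentially the paper's own argument: the paper proves Proposition~\ref{prop5} by saying it is ``exactly like the proof of Proposition~\ref{evenprop},'' i.e., append an arbitrary letter $a$ to $w$ and observe that the palindromic prefixes (here, the odd ones) are unchanged, since the only new prefix $wa$ has even length. Your version merely spells out the parity observation and the inverse (delete-the-last-letter) map that the paper leaves implicit in ``since $a$ is arbitrary, the result follows.''
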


\begin{proof}
Exactly like the proof of Proposition~\ref{evenprop}.
\end{proof}

\begin{theorem}
We have
\begin{itemize}
\item[(a)] $\opp_{k,S} (n) = \epp_{k,S} (n)$ for $n$ odd; and
\item[(b)] $\opp_{k,S} (n) = k \cdot \epp_{k,S} (n-1)$ for $n$ even.
\end{itemize}
\end{theorem}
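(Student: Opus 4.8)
The plan is to first dispose of part (b) and then concentrate all the work on part (a). Part (b) is essentially free: if $n$ is even then $n-1$ is odd, so Proposition~\ref{prop5} gives $\opp_{k,S}(n) = k\cdot\opp_{k,S}(n-1)$, and part (a), applied at the odd length $n-1$, gives $\opp_{k,S}(n-1)=\epp_{k,S}(n-1)$. Combining the two yields $\opp_{k,S}(n)=k\cdot\epp_{k,S}(n-1)$, which is exactly (b). So the entire theorem reduces to producing, for each odd $n$, a count-preserving correspondence between the length-$n$ words whose odd palindromic prefix orders are exactly $S$ and those whose even palindromic prefix orders are exactly $S$.

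For part (a) I would not attempt a direct rearrangement of letters, and I expect the reason to be the heart of the difficulty. An even palindromic prefix of order $i$ folds a length-$2i$ prefix about the half-integer position $i+\tfrac12$ (pairing position $t$ with $2i+1-t$), whereas an odd palindromic prefix of order $i$ folds a length-$(2i+1)$ prefix about the integer center $i+1$ (pairing $t$ with $2i+2-t$). No single permutation of positions can turn every integer-centered reflection into a half-integer-centered one simultaneously, so the correspondence cannot be a mere permutation of the letters, unlike the map $f$. The natural device instead is the de-shuffle: split $w$ into its odd-indexed subword $A=w[1]w[3]\cdots$ of length $\lceil n/2\rceil$ and its even-indexed subword $B=w[2]w[4]\cdots$ of length $\lfloor n/2\rfloor$. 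A direct check, in the spirit of Lemma~\ref{one}, shows that $w$ has an odd palindromic prefix of order $i$ iff the length-$(i+1)$ prefix of $A$ and the length-$i$ prefix of $B$ are both palindromes, while $w$ has an even palindromic prefix of order $i$ iff the length-$i$ prefixes of $A$ and $B$ are mutual reverses. This recasts (a) as a statement about pairs of words $(A,B)$, comparing the family in which both prefixes are palindromic with the family in which the two prefixes reverse one another.

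Finally, I would build an explicit bijection $\Phi$ on these pairs using the additive group structure of $\Sigma_k=\{0,1,\ldots,k-1\}=\mathbb{Z}/k\mathbb{Z}$, which is presumably the ``reason that will be clear later'' for fixing the alphabet to be numeric at the outset. The idea is a telescoping difference transform that converts the two independent palindrome defects $A[t]-A[i+2-t]$ and $B[t]-B[i+1-t]$ into the single reverse-matching defect $A'[t]-B'[i+1-t]$, arranged so that an order $i$ is present on one side exactly when it is present on the other. The main obstacle is to carry this out so that the transform (i) is a genuine bijection of pairs and (ii) matches the entire order-set at once rather than one order at a time, since the prefixes for different orders overlap and are not nested. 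Once such a $\Phi$ is verified, applying it at the de-shuffle level and reshuffling gives the required length-$n$ bijection, establishing (a) and hence the theorem.
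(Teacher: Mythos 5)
Your reduction of (b) to (a) via Proposition~\ref{prop5} is exactly right (and is what the paper does), and your de-shuffle reformulation of (a) is correct: one can check, in the spirit of Lemma~\ref{one}, that $w$ has an odd palindromic prefix of order $i$ iff $A[1..i+1]$ and $B[1..i]$ are both palindromes, and an even one of order $i$ iff $A[1..i]=(B[1..i])^R$. But at the point where the theorem actually has to be proved---the construction of $\Phi$---you stop and instead list the properties $\Phi$ ought to have, conceding that making it a genuine bijection that matches the whole order-set at once is ``the main obstacle.'' That is the entire content of part (a), so as it stands the proposal is a plan, not a proof; nothing in it certifies that such a $\Phi$ exists.

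The gap is filled in the paper by an explicit map that you nearly name but never write down: for odd $n$, set $g(a_1a_2\cdots a_n) = (a_1+a_2)(a_2+a_3)\cdots(a_{n-1}+a_n)$, with addition mod $k$ (this is where the numeric alphabet gets used). Two short computations show that $w$ has an odd palindromic prefix of order $i$ iff $g(w)$ has an even palindromic prefix of order $i$: in one direction, the symmetry $a_j=a_{2i+2-j}$ makes the sums pair up into an even palindrome; in the other, palindromicity of the sums forces $a_i=a_{i+2}$, then $a_{i-1}=a_{i+3}$, \ldots, $a_1=a_{2i+1}$, working from the middle outward. Since orders are preserved pointwise in both directions, arbitrary order-sets $S$ are matched automatically, so the overlapping, non-nested prefixes you worry about cause no difficulty. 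Note also that the paper does not produce a length-preserving bijection at all: $g$ is a $k$-to-$1$ map onto length-$(n-1)$ words (the fiber corresponding to the free choice of $a_1$), which gives $\opp_{k,S}(n)=k\cdot\epp_{k,S}(n-1)$, and Proposition~\ref{evenprop} then converts this to $\epp_{k,S}(n)$. Insisting on a bijection of de-shuffled pairs, as you do, is an unnecessary extra burden; if you want one anyway, $w\mapsto g(w)\,a_1$ works, since appending a letter to the even-length word $g(w)$ creates or destroys no even palindromic prefixes.
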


\begin{proof}
We begin by proving 
$\opp_{k,S} (n) = k \cdot \epp_{k,S} (n-1)$ for $n$ odd.   We do this
by creating a $k$ to $1$ map from the length-$n$ words with 
odd palindromic
prefix orders given by $S$ to the length-$(n-1)$ words with even
palindromic prefix orders given by $S$.

Here is the map.  Let $w = a_1 a_2 \cdots a_n$ be a word of odd length,
and define $g(w) = (a_1 + a_2) (a_2 + a_3) \cdots (a_{n-1} + a_n)$, where
the addition is performed modulo $k$.  We claim that this is a $k$ to $1$
map, and furthermore, it maps words with odd palindromic prefix orders
given by $S$ to words with even palindromic prefix orders also given by $S$.

To see the first claim, observe that if both $g(w)$ and $a_1$ are given,
then we can uniquely reconstruct $w$.  Since $a_1$ is arbitrary, this
gives a $k$ to $1$ map.  

To see the second claim, suppose $w = a_1 a_2 \cdots a_n$
has an odd palindromic prefix of
order $i$.   Then $a_{2i+2-j} = a_j$ for $1 \leq j \leq i+1$.
Hence, applying the map $g$ to a prefix of $w$ we get
\begin{align*}
 & g(a_1 a_2 \cdots a_{2i} a_{2i+1} )  \\
\quad &= (a_1 + a_2) (a_2+a_3) \cdots (a_{2i} + a_{2i+1})   \\
\quad &= (a_1+a_2) (a_2+a_3) \cdots (a_{i-1} + a_i) (a_i + a_{i+1}) 
(a_{i+1} + a_i) (a_i + a_{i-1}) \cdots (a_3 + a_2) (a_2 + a_1),
\end{align*}
which is clearly an even palindrome of order $i$.

On the other hand, if 
$(a_1+a_2) (a_2+a_3) \cdots (a_{2i-1} + a_{2i}) (a_{2i} + a_{2i+1})$
is a palindrome, then by examining the two elements in the middle,
we get $a_{i} + a_{i+1} \equiv \modd{a_{i+1} + a_{i+2}} {k}$,
which forces $a_i = a_{i+2}$.  Continuing from the middle out to
the end, we successively obtain
$a_{i-1} = a_{i+3}$, \ldots, $a_1 = a_{2i+1}$,
which shows that $w$ starts with an odd palindrome of order $i$.

Hence for $n$ odd we get
\begin{equation}
\opp_{k,S} (n) = k \cdot \epp_{k,S} (n-1) =  \epp_{k,S} (n) ,
\label{opp1}
\end{equation}
where we have used Proposition~\ref{evenprop}.

For $n$ even we get
\begin{align*}
\opp_{k,S} (n) &= k \cdot \opp_{k,S} (n-1) \quad \text{(by Proposition~\ref{prop5})} \\
& = k \cdot \epp_{k,S} (n-1) 
	\quad \text{ (by Eq.~\eqref{opp1})},
\end{align*}
which completes the proof.
\end{proof}

\begin{corollary}
Consider words over a $k$-letter alphabet.

For $n$ odd, we have $t_k(n) = u_k(n)$; that is,
the number of length-$n$ words having no nontrivial
odd palindromic prefix is the same as the number
of length-$n$ words that are unbordered.

For $n$ even, we have $t_k(n) = k u_k(n-1)$; that is,
the number of length-$n$ words having no nontrivial odd
palindromic prefix is $k$ times the number of length-$(n-1)$
unbordered words.
\end{corollary}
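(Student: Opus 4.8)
The plan is to derive this corollary as a direct specialization of the preceding theorem to the empty set $S = \emptyset$, combined with the bijection-based identity $u_k(n) = v_k(n)$ established in the first corollary. First I would translate the quantities of interest into the notation already developed. A word has no nontrivial odd palindromic prefix exactly when its set of odd palindromic prefix orders (ignoring the trivial single-letter prefix, as stipulated in the definition of $\opp$) is empty; hence $t_k(n) = \opp_{k,\emptyset}(n)$. Likewise, a word has no even palindromic prefix exactly when its set of even palindromic prefix orders is empty, so $v_k(n) = \epp_{k,\emptyset}(n)$.

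With these identifications in hand, the odd case is immediate. For $n$ odd, part (a) of the theorem with $S = \emptyset$ gives $\opp_{k,\emptyset}(n) = \epp_{k,\emptyset}(n)$, that is, $t_k(n) = v_k(n)$; the first corollary then supplies $v_k(n) = u_k(n)$, yielding $t_k(n) = u_k(n)$. For $n$ even I would instead invoke part (b) of the theorem with $S = \emptyset$, which gives $\opp_{k,\emptyset}(n) = k \cdot \epp_{k,\emptyset}(n-1)$, i.e.\ $t_k(n) = k \cdot v_k(n-1)$. Applying $v_k = u_k$ (valid for every length, so no parity caveat is needed) turns this into $t_k(n) = k \cdot u_k(n-1)$, as claimed.

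I do not expect any genuine obstacle here: the mathematical content lives entirely in the theorem and the earlier corollary, and what remains is purely bookkeeping. The only point demanding care is the translation step---in particular, being sure that ``no nontrivial odd palindromic prefix'' matches $\opp_{k,\emptyset}$ under the convention that the length-one palindromic prefix is never counted, and that the identity $u_k = v_k$ is invoked at the correct length argument ($n$ in the odd case, $n-1$ in the even case).
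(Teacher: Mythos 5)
Your proposal is correct and matches the paper's intended derivation: the paper states this corollary immediately after the theorem on $\opp_{k,S}$ and $\epp_{k,S}$ with no written proof, precisely because it follows by specializing to $S = \emptyset$ and invoking the earlier identity $u_k(n) = v_k(n)$, exactly as you do. Your care with the translation $t_k(n) = \opp_{k,\emptyset}(n)$, $v_k(n) = \epp_{k,\emptyset}(n)$ and with the length argument in the even case is exactly the right bookkeeping.
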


\begin{remark}
It is seductive, but wrong, to think that the map $g$ also maps
even-length palindromic prefixes in a $k$ to $1$ manner to odd-length
palindromic prefixes, but this is not true (consider what happens to the
center letter).
\end{remark}

\section{An application}

As an application of our results
we can (for example)
determine the asymptotic fraction 
of length-$n$ words having no nontrivial even
palindromic prefix (resp., having no nontrivial odd palindromic prefix).

\begin{corollary}
For all $k \geq 2$ there is a constant $\gamma_k > 0$ such that
the number of length-$n$ words having no nontrivial even
palindromic prefix (resp., having no nontrivial odd palindromic prefix)
is asymptotically equal to $\gamma_k \cdot k^n$.
\end{corollary}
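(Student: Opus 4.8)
The plan is to reduce both asymptotic statements to the single quantity $u_k(n)$, the number of length-$n$ unbordered words, and then invoke the growth rate of $u_k(n)$. First I would observe that a nonempty even palindrome has length at least $2$ and is therefore automatically nontrivial; hence the number of length-$n$ words with no nontrivial even palindromic prefix is exactly $v_k(n)$, which by the corollary to Theorem~\ref{two} equals $u_k(n)$. For the odd case, the corollary in the previous section gives $t_k(n) = u_k(n)$ when $n$ is odd and $t_k(n) = k\, u_k(n-1)$ when $n$ is even. Thus everything rests on proving $u_k(n) \sim \gamma_k\, k^n$ for a single constant $\gamma_k > 0$.

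Second, I would establish that $\gamma_k := \lim_{n\to\infty} u_k(n)/k^n$ exists. The key input is the classical recurrence for unbordered words,
$$ u_k(n) = k\, u_k(n-1) - [\,n \text{ even}\,]\cdot u_k(n/2), \qquad n \geq 2. $$
Setting $f(n) := u_k(n)/k^n$ and dividing by $k^n$ converts this into $f(n) = f(n-1)$ for $n$ odd and $f(n) = f(n-1) - k^{-n/2} f(n/2)$ for $n$ even. Hence $(f(n))_{n\geq 1}$ is non-increasing and bounded below by $0$, so the limit $\gamma_k$ exists and is nonnegative.

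Third — and this is the only genuinely analytic step — I would show $\gamma_k > 0$ by telescoping. Since $f(1) = u_k(1)/k = 1$ and each even index subtracts $k^{-n/2} f(n/2)$, summing the increments gives $\gamma_k = 1 - \sum_{m \geq 1} k^{-m} f(m)$. Using $f(1) = 1$ together with the monotonicity bound $f(m) \leq f(2) = (k-1)/k$ for all $m \geq 2$, a short geometric-series estimate yields $\sum_{m \geq 1} k^{-m} f(m) \leq \tfrac{1}{k} + \tfrac{1}{k^2}$, whence
$$ \gamma_k \;\geq\; \frac{k^2 - k - 1}{k^2} \;>\; 0 \qquad \text{for every } k \geq 2. $$
(Alternatively, one may simply cite the known fact that the density of unbordered words over a $k$-letter alphabet is a positive constant.) The main obstacle is precisely this positivity — that is, that the unbordered words do not have density zero — since convergence itself is immediate from monotonicity once the recurrence is in hand; the boundary case $k=2$, where the crude bound $\sum_m k^{-m} = 1$ is not by itself conclusive, is handled by the explicit estimate above.

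Finally, I would assemble the asymptotics. In the even case the count is $u_k(n) \sim \gamma_k k^n$ directly. In the odd case, when $n$ is odd we have $t_k(n) = u_k(n) \sim \gamma_k k^n$, while when $n$ is even we have $t_k(n) = k\, u_k(n-1) \sim k \cdot \gamma_k k^{n-1} = \gamma_k k^n$; so both parities obey the same asymptotic with the \emph{same} constant $\gamma_k$, which completes the proof.
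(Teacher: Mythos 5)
Your proof is correct, and it differs from the paper's in one substantive way. The reduction is identical to the paper's: you identify the count of words with no nontrivial even palindromic prefix with $v_k(n) = u_k(n)$, and the odd count with $t_k(n)$, expressed through $u_k$ via the two corollaries. But where the paper's entire proof is a citation --- it invokes the known asymptotic $u_k(n) \sim \gamma_k k^n$ for unbordered words from Nielsen, Blom, and Holub--Shallit --- you prove that asymptotic yourself from the classical recurrence $u_k(n) = k\,u_k(n-1) - [\,n \text{ even}\,]\,u_k(n/2)$: normalizing by $k^n$ makes $f(n) = u_k(n)/k^n$ non-increasing, so the limit exists, and telescoping gives $\gamma_k = 1 - \sum_{m \geq 1} k^{-m} f(m)$, which your bound $f(m) \leq f(2) = (k-1)/k$ for $m \geq 2$ shows is at least $(k^2-k-1)/k^2 > 0$. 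Your handling of $k=2$ is the right touch: the crude bound $f(m) \leq 1$ gives only $\gamma_2 \geq 0$, and the refined estimate ($\gamma_2 \geq 1/4$, consistent with the true value $\doteq 0.2678$ quoted at the end of the paper) is what closes that case. What your route buys is self-containedness --- modulo the recurrence itself, which you take as known; it is indeed Nielsen's and is provable by deleting the middle position of an odd-length word, respectively by accounting for square prefixes at even lengths --- together with an explicit lower bound on $\gamma_k$; what the paper's citation buys is brevity and access to sharper numerical constants. It is worth noting that your telescoping argument is precisely the technique the paper itself deploys in its later sections for $A_k(n)$ (no nontrivial palindromic prefix) and $s_k(n)$ (no square prefix), so your proof fits the paper's own methodology, just applied one section earlier than the authors chose to.
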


\begin{proof}
Follows immediately from the same result for unbordered words; 
see \cite{Nielsen:1973,Blom:1995,Holub&Shallit:2016}.  For related results,
see \cite{Richmond&Shallit:2014}.
\end{proof}

\section{Interlude:  the permutation defined by $f$}

The map $f$ defined in Section~\ref{secf} can be considered
as a permutation on $a_1 a_2 \cdots a_n$.  In this case,
we write it as $f_n$.  For example, if
$n = 7$, the resulting permutation $f_n$ is
$$ \left( \begin{array}{ccccccc}
	1 & 2 & 3 & 4 & 5 & 6 & 7 \\
	1 & 7 & 2 & 6 & 3 & 5 & 4 
	\end{array} \right) .$$
This is an interesting permutation that has been well-studied in the
context of card-shuffling, where it is called the {\it milk shuffle}.
A classic result about the milk shuffle is the following \cite{Levy:1951}:
\begin{theorem}
The order of the permutation $f_n$ is the least $m$ such that
$2^m = \pm 1 $ {\rm (mod} $2n-1${\rm )}.
\end{theorem}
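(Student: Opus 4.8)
The plan is to realize the permutation $f_n$ (up to passing to its inverse, which has the same order) as multiplication by $2$ on a suitable set of $n$ residue classes modulo $2n-1$, and then reduce the statement to an elementary computation of the order of that multiplication map.

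First I would record an explicit description of $f_n$ as a map on positions. Reading off the definition of $f$, the letter in output position $2j-1$ is $a_j$ and the letter in output position $2j$ is $a_{n+1-j}$ (the middle letter, when $n$ is odd, being the case $j=\lceil n/2\rceil$ of the first family). Equivalently, the inverse shuffle $\pi := f_n^{-1}$ sends input position $j$ to $2j-1$ for $1\le j\le\lceil n/2\rceil$, and input position $n+1-j$ to $2j$ for $1\le j\le\lfloor n/2\rfloor$. Since $f_n$ and $\pi$ have the same order, it suffices to compute the order of $\pi$.

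Next I would set up the residue-class model. Work modulo $M:=2n-1$ and identify each residue $r$ with $-r$; write $[r]$ for the resulting class. Since $M$ is odd, the representatives $0,1,\ldots,n-1$ form a transversal of these classes, so $\phi(i):=[\,i-1\,]$ is a bijection from positions $\{1,\ldots,n\}$ onto the $n$ classes. The main computation is then to check, by the two-range case analysis above and after re-indexing from $0$, that $\pi$ sends the class $[P]$ to $[2P]$: in the first range one gets output residue $2P$, and in the second range one gets $2n-1-2P\equiv-2P$, so in both cases the image class is $[2P]$. Hence $\phi$ conjugates $\pi$ to the map $\mu:[r]\mapsto[2r]$, and $\operatorname{ord}(f_n)=\operatorname{ord}(\pi)=\operatorname{ord}(\mu)$.

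Finally I would compute $\operatorname{ord}(\mu)$ directly. We have $\mu^m=\mathrm{id}$ iff $[2^m r]=[r]$ for every class, i.e.\ iff $2^m r\equiv\pm r\pmod M$ for all $r$. If $2^m\equiv\pm1\pmod M$ this holds for every $r$, while taking $r=1$ shows the condition forces $2^m\equiv\pm1\pmod M$; hence the least $m$ with $\mu^m=\mathrm{id}$ is exactly the least $m$ with $2^m\equiv\pm1\pmod{2n-1}$, as claimed. The only delicate point is step two---getting the boundary between the two ranges right and checking that the reflection $r\sim-r$ is precisely what absorbs the sign discrepancy between the two halves of the shuffle---so that is where I would be most careful.
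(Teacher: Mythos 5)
Your proposal is correct, but there is nothing in the paper to compare it against: the paper does not prove this theorem at all, it simply quotes it as a classic result about the milk shuffle and cites L\'evy's 1951 paper. What you have written is in essence the standard (and correct) proof of that classical fact, and it would make the statement self-contained. Your description of $f_n$ on positions is right (output position $2j-1$ carries $a_j$, output position $2j$ carries $a_{n+1-j}$, with the middle letter absorbed into the first family), and passing to the inverse is harmless since a permutation and its inverse have the same order. The key conjugation also checks out: with $P = i-1$ and $M = 2n-1$, the first range gives $\pi(i)-1 = 2P$ and the second gives $\pi(i)-1 = (2n-1)-2P \equiv -2P \pmod{M}$, so under the identification $r \sim -r$ the map is exactly $[P] \mapsto [2P]$; the sign ambiguity built into the classes is precisely what reconciles the two halves of the shuffle, as you anticipated. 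The final step (order of doubling on classes equals the least $m$ with $2^m \equiv \pm 1 \pmod{2n-1}$, using $r=1$ for the converse) is also sound. As a sanity check, for $n=7$ your $\pi$ is the $6$-cycle $(2\,3\,5\,6\,4\,7)$ together with the fixed point $1$, and indeed $m=6$ is the least exponent with $2^m \equiv \pm 1 \pmod{13}$, matching the paper's displayed example.
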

This is sequence \seqnum{A003558} in the OEIS.

\section{No palindromic prefix}

In this section we consider the words having no nontrivial palindromic prefix.
(Recall that a palindrome is trivial if it is of length $\leq 1$.)
This is only of interest for alphabet size $k \geq 3$, for if $k = 2$,
the only such words are of the form $01^i$ and $1 0^i$.

Let $A_k (n)$ denote the number of such words over a $k$-letter alphabet.
We use the technique of \cite{Blom:1995,Holub&Shallit:2016} to show
that $A_k(n) \sim \rho_k k^n$ for a constant $\rho_k$ and large $n$.
First we need a lemma, which can essentially be found in (for example)
\cite[Prop.~6]{Blondin-Masse&Brlek&Garon&Labbe:2011}.

\begin{lemma}
Let $w$ be a palindrome and let $p$ be a proper
palindromic prefix of $w$.
If $|p| > |w|/2$, then $w$ also has a nonempty palindromic prefix of length
$< |w|/2$.
\label{blond}
\end{lemma}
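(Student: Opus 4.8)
The plan is to exploit the fact that a palindromic prefix of a palindrome is automatically a palindromic suffix, and hence a long border, which forces a short period. Write $n = |w|$, $m = |p|$, and $d = n - m$, so that $0 < d < n/2$. Since $w = w^R$ and $p = p^R$, the reversal of the prefix $p$ is the suffix $p$, so $p$ occurs as both a prefix and a suffix of $w$; as $m > n/2$ these two occurrences overlap, exactly the overlap situation noted in the introduction. Reading off the two copies gives the period relation $w[i] = w[i+d]$ for $1 \le i \le m$. The central claim I would establish is that the prefix $c := w[1 .. 2m-n]$ is itself a palindrome; note $2m - n = m - d \ge 1$, so $c$ is nonempty.

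To prove this claim I would simply chase indices, using three facts: that $p$ is a palindrome ($w[i] = w[m+1-i]$ for $1 \le i \le m$), that $w$ is a palindrome ($w[i] = w[n+1-i]$), and the period relation above. Fix $j$ with $1 \le j \le m - d$. On one hand the period relation gives $w[j] = w[j+d]$. On the other hand, applying the palindrome identity for $p$ to the index $m - d + 1 - j$ gives $w[m-d+1-j] = w[d+j]$. Hence $w[j] = w[(2m-n)+1-j]$, which is exactly the palindrome condition for $c$. All the index ranges are easily checked to be valid because $1 \le j \le m - d \le m$.

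If $|c| = 2m - n < n/2$ we are done, since $c$ is a nonempty palindromic prefix of the required length. Otherwise $c$ is again a proper palindromic prefix longer than $n/2$, and I would iterate the construction with $c$ in place of $p$. Tracking the lengths, the $i$-th palindromic prefix produced has length $n - 2^i d$, a strictly decreasing sequence of positive integers that stays positive as long as the current length exceeds $n/2$ (since a length $> n/2$ always yields a successor $2(\text{length}) - n > 0$). Thus the process must reach a nonempty palindromic prefix of length $\le n/2$. The one place requiring care — and the main obstacle — is the boundary case in which the sequence lands on the value $n/2$ exactly rather than dropping strictly below it; there I would observe that the hypotheses force $n \ge 3$ (an integer strictly between $n/2$ and $n$ must exist), so that the single letter $w[1]$ is already a nonempty palindromic prefix of length $1 < n/2$, settling that case and completing the proof.
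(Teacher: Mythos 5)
Your proof is correct, and it takes a genuinely different route from the paper's. Both arguments begin with the same observation: since $w$ and $p$ are palindromes, $p$ is also a suffix of $w$, so $w = py = zp$ with $|y| = |z| = d = |w| - |p|$. But the paper then invokes the Lyndon--Sch\"utzenberger conjugacy theorem to write $z = uv$, $y = vu$, $p = (uv)^e u$ with $u$ nonempty, and deduces from $py = w = w^R = pz^R$ that $vu = v^R u^R$, hence $u = u^R$; thus $u$ is a nonempty palindromic prefix of $w$ of length $|u| \le |y| = d < |w|/2$, in one shot, with no iteration and no boundary case. You instead verify by an elementary index chase (whose ranges I checked; they are valid) that the prefix of length $2|p| - |w|$ is a palindrome, and then iterate through lengths $|w| - 2^i d$, with a separate patch when an iterate lands exactly on $|w|/2$. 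Your version is more self-contained --- no appeal to the conjugacy theorem --- while the paper's is shorter and gives a sharper output, namely a palindromic prefix of length at most $d$. One further observation: your boundary patch ($w[1]$ is a nonempty palindromic prefix of length $1 < |w|/2$, since the hypotheses force $|w| \ge 3$) already proves the lemma \emph{exactly as stated}, with no other work, because the statement asks only for a nonempty palindromic prefix and a single letter qualifies. So, formally, your iteration (and equally the paper's algebra) is redundant for the literal statement; the substantive content in both proofs is the stronger, unstated conclusion (a structured short palindromic prefix), and that stronger conclusion cannot in general be improved to yield a \emph{nontrivial} prefix of length $< |w|/2$, as the example $w = 01010$, $p = 010$ shows --- a point worth keeping in mind when the lemma is applied in Lemma~\ref{waw}.
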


\begin{proof}
If $p$ is a prefix of $w$, then $p^R$ is a suffix of $w^R$.
Since both $p$ and $w$ are palindromes, this means $p$ is a suffix of $w$.
Hence there exist nonempty words  $y, z$ such that $w = py = zp$.
By the Lyndon-Sch\"utzenberger theorem \cite{Lyndon&Schutzenberger:1962}
there exist $u, v$ with $u$ nonempty, and an integer
$e \geq 0$ such that 
$z = uv$, $p = (uv)^e u$, and $y = vu$.   Since $|p| > |w|/2$, it 
follows that $|u| \leq |y|< |w|/2 < |p|$.
Since $p$ is a palindrome, we have
$w = zp = zp^R$.   Since $w$ is a palindrome, we have
$py = w = w^R = p z^R$.   So $vu = y = z^R = v^R u^R$, and so
$u = u^R$.   Thus $u$ is a nonempty palindromic prefix of  $z$,
which is a prefix of $w$, and $|u| < |w|/2$.
\end{proof}

\begin{lemma}
Let $w, a$ be words with $|a| \leq 1$.  Then $wa$ has a nontrivial
palindromic prefix iff $waw^R$ has a nontrivial proper palindromic prefix.
\label{waw}
\end{lemma}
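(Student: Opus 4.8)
The statement is an ``iff'' with two directions, and the first thing I would record is that, since $|a|\le 1$ we have $a = a^R$, so $W := waw^R$ is itself a palindrome; this is the structural fact I would lean on throughout. I would begin by clearing away the degenerate cases $|wa|\le 1$: there $wa$ is too short to have any nontrivial palindromic prefix, while $W$ has length at most $2$ and so has no nontrivial \emph{proper} palindromic prefix, so both sides are false. Hence assume $|w|\ge 1$ and $|wa|\ge 2$. The forward direction is then immediate: a nontrivial palindromic prefix $p$ of $wa$ is also a prefix of $W=(wa)w^R$, hence a nontrivial palindromic prefix of $W$, and it is proper because $|p|\le|wa|=|w|+|a|<2|w|+|a|=|W|$, using $|w|\ge 1$.

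For the reverse direction, suppose $W$ has a nontrivial proper palindromic prefix $q$. If $|q|\le|wa|$, then $q$, being a prefix of $W$ of length at most $|wa|$, is already a prefix of $wa$, hence a nontrivial palindromic prefix of $wa$, and we are done. The real content is the case $|q|>|wa|$, and this is where I expect the main obstacle to lie.

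To handle $|q|>|wa|$ I would exploit a correspondence valid for the \emph{palindrome} $W$: a prefix $u$ of $W$ is palindromic iff $u$ is a border of $W$ (if $u=u^R$ then $u$ is a suffix of $W^R=W$, and conversely any border of a palindrome is a palindrome), and $u$ is a border of length $\ell$ iff $|W|-\ell$ is a period of $W$. Thus $q$ yields a period $\pi:=|W|-|q|$ of $W$, and $|q|>|wa|$ forces $\pi<|w|$, so in particular $\pi\le|wa|-1$. Now every multiple $m\pi$ with $m\pi\le|W|$ is again a period of $W$, and the window $[\,|w|,\,|W|-2\,]$ contains exactly $|wa|-1\ge\pi$ consecutive integers, so it must contain some multiple $m\pi$. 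That multiple corresponds to a palindromic prefix of $W$ of length $|W|-m\pi\in[2,|wa|]$, which (having length $\le|wa|$) is a prefix of $wa$ and (having length $\ge 2$) is nontrivial, finishing the proof. Equivalently, one can descend through the chain of borders of the palindrome $q$, each of which is again a palindromic prefix of $W$, stepping down by $\pi$ until the first length $\le|wa|$ is reached.

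The crux, and the reason a single appeal to Lemma~\ref{blond} does not close the argument, is exactly this case $|q|>|wa|$: one application of Lemma~\ref{blond} only produces a nonempty palindromic prefix of length $<|W|/2$, which may have length $1$ and thus be trivial. The purpose of the multiples (equivalently, border-chain) argument is to land a palindromic prefix whose length falls inside the window $[2,|wa|]$, and the inequality $\pi\le|wa|-1$, which makes the window at least $\pi$ integers wide, is precisely what prevents the descent from skipping over that window.
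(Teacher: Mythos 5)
Your proof is correct, and it takes a genuinely different route from the paper's. The paper makes the same initial reduction (let $q$ be the \emph{shortest} nontrivial proper palindromic prefix of $s=waw^R$; if $|q|\le|wa|$ we are done), but in the remaining case $|q|>|wa|\ge|s|/2$ it simply invokes Lemma~\ref{blond} once and declares a contradiction with the minimality of $q$. Your replacement for that step---the correspondence, valid because $W=waw^R$ is a palindrome, between palindromic prefixes, borders, and periods, followed by the observations that every multiple $m\pi\le|W|$ of the period $\pi=|W|-|q|$ is again a period and that the window $[\,|w|,\,|W|-2\,]$ of $|wa|-1\ge\pi$ consecutive integers must catch such a multiple---is self-contained (no appeal to Lyndon--Sch\"utzenberger) and lands a palindromic prefix of length in $[2,|wa|]$, i.e., one that is \emph{guaranteed} nontrivial. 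Moreover, your closing remark is not merely stylistic: Lemma~\ref{blond} as stated yields only a \emph{nonempty} palindromic prefix of length $<|w|/2$, and this cannot in general be upgraded to \emph{nontrivial} (the palindrome $000$ has the nontrivial proper palindromic prefix $00$ of length greater than $3/2$, yet its only palindromic prefix of length less than $3/2$ is the trivial $0$). The paper's proof quotes the lemma as producing a nontrivial prefix, so as written that step is overstated; it can be repaired by inspecting the proof of Lemma~\ref{blond} (if the prefix $u$ produced there is a single letter, then $q=(uv)^eu$ being a palindrome forces $v=v^R$, and $uvu$ is then a nontrivial proper palindromic prefix of length $|W|-|q|+1\le|w|<|q|$, restoring the contradiction), but your period-window argument sidesteps the issue uniformly. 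The trade-off is brevity: the paper's argument is two lines once Lemma~\ref{blond} is in hand, whereas you re-derive the needed periodicity facts; in exchange you get an argument that is elementary, airtight on the trivial-versus-nontrivial point, and independent of the Lyndon--Sch\"utzenberger machinery.
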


\begin{proof}
One direction is trivial.  

For the other direction, let the shortest nontrivial proper
palindromic prefix of $s := waw^R$ be $q$.   If $|q| \leq |wa|$, then
$q$ is a prefix of $wa$ as desired.  
Otherwise we have $|q| > |wa| \geq |s|/2$.  Then by Lemma~\ref{blond},
the word $s$ also has a nontrivial palindromic prefix of
length $< |s|/2$, contradicting the definition of $q$.
\end{proof}

\begin{proposition}
For $n \geq 1$ we have
\begin{align}
A_k(2n) &= k A_k(2n-1) - A_k(n)  \label{p1} \\
A_k(2n+1) &= k A_k(2n) - A_k(n+1)  \label{p2} .
\end{align}
\end{proposition}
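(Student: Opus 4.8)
The plan is to prove both recurrences simultaneously by a single counting argument and to specialize to the two parities only at the end. The key idea is to reinterpret the product $k A_k(m)$: it counts exactly the length-$(m+1)$ words of the form $wa$, where $w$ is a length-$m$ word having no nontrivial palindromic prefix and $a$ ranges over the $k$ single letters (distinct pairs $(w,a)$ yield distinct words $wa$). I would then partition these $k A_k(m)$ words according to whether $wa$ itself has a nontrivial palindromic prefix.

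First I would observe that, since $w$ already has no nontrivial palindromic prefix, every prefix of $wa$ of length at most $m$ coincides with a prefix of $w$ and so fails to be a nontrivial palindrome. Consequently $wa$ has a nontrivial palindromic prefix if and only if the entire word $wa$ is a palindrome. The words $wa$ with \emph{no} nontrivial palindromic prefix are then precisely the length-$(m+1)$ words with no nontrivial palindromic prefix — because stripping the last letter from any such word leaves a good word of length $m$ — and there are $A_k(m+1)$ of them. Subtracting gives
\begin{equation*}
A_k(m+1) = k A_k(m) - P(m+1),
\end{equation*}
where $P(m+1)$ denotes the number of length-$(m+1)$ palindromes $s$ whose length-$m$ prefix has no nontrivial palindromic prefix, equivalently the number of length-$(m+1)$ palindromes having no nontrivial \emph{proper} palindromic prefix.

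The crux is to evaluate $P(m+1)$, and this is exactly what Lemma~\ref{waw} supplies. Writing an arbitrary palindrome of length $m+1$ in the form $waw^R$ (with $a$ empty when $m+1$ is even, and $a$ a single center letter when $m+1$ is odd), Lemma~\ref{waw} tells us that $waw^R$ has no nontrivial proper palindromic prefix if and only if $wa$ has no nontrivial palindromic prefix. Since the map sending the palindrome $waw^R$ to its ``first half'' $wa$ is a bijection onto the set of all words of length $\lceil (m+1)/2 \rceil$, I conclude that $P(m+1) = A_k(\lceil (m+1)/2 \rceil)$. Specializing $m = 2n-1$ gives $|wa| = n$ and hence $P(2n) = A_k(n)$, which is Eq.~\eqref{p1}; specializing $m = 2n$ gives $|wa| = n+1$ and hence $P(2n+1) = A_k(n+1)$, which is Eq.~\eqref{p2}.

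I expect the only delicate point to be the bookkeeping that unifies the two palindrome parametrizations: verifying that the ``first half'' $wa$ really has length $\lceil (m+1)/2 \rceil$ in both the even case ($wa = w$ of length $n$) and the odd case ($wa$ of length $n+1$), and that the half-to-palindrome correspondence is a genuine bijection in each case. All of the substantive work is already encapsulated in Lemma~\ref{waw}, so once the reinterpretation of $k A_k(m)$ and the ``bad prefix iff full palindrome'' observation are in place, the remainder of the argument should be essentially mechanical.
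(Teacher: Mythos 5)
Your proof is correct and takes essentially the same approach as the paper's: append a letter to a word with no nontrivial palindromic prefix, observe that the only possible new nontrivial palindromic prefix is the whole word itself, and invoke Lemma~\ref{waw} to identify the excluded palindromes of length $m+1$ bijectively with the $A_k(\lceil (m+1)/2 \rceil)$ good words of half length. The only difference is presentational: you treat both parities uniformly in $m$ and spell out the bookkeeping that the paper's terse proof leaves implicit.
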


\begin{proof}
Consider the words of length $2n-1$ having no nontrivial palindromic prefix.
By appending a new letter, we get $k A_k(2n-1)$ words.  However, 
some of these words can be palindromes of length $2n$, and we do not
want to count these.  By Lemma~\ref{waw}, the number of length-$2n$
palindromes having no proper palindromic prefix is $A_k (n)$.
This gives \eqref{p1}.

A similar argument works to prove \eqref{p2}.
\end{proof}

For $k = 3$ the corresponding sequence is given below and is sequence
\seqnum{A252696} in the OEIS:
\begin{table}[H]
\begin{center}
\begin{tabular}{c|cccccccccccc}
$n$ & 1 & 2 & 3 & 4 & 5 & 6 & 7 & 8 & 9 & 10 & 11 & 12 \\
\hline
$a_3 (n)$ &    3&     6&    12&    30&    78&   222&   636&  1878&  5556& 16590& 49548&148422\\
\end{tabular}
\end{center}
\end{table}

Now define $T_k (n)$ by $T_k(n) = A_k(n) k^{-n}$.  From
\eqref{p1} and \eqref{p2} we get
\begin{align*}
T_k(2n) &= T_k(2n-1) - T_k(n) k^{-n} \\
T_k(2n+1) &= T_k(2n) - T_k(n+1) k^{-n} .
\end{align*}
It now follows that
\begin{equation}
T_k(2n) = T_k(2n-2) - (k+1) T_k(n) k^{-n} . \label{p3}
\end{equation}
By telescoping cancellation applied to \eqref{p3}, we now get
$$ T_k(2n) = {{k-1} \over k} - (k+1) \sum_{i=2}^n T_k(i) k^{-i} ,$$
or
$$ T_k(2n) = 2 - (k+1) \sum_{i=1}^n T_k(i) k^{-i} .$$

Next, define $h(X) = \sum_{n \geq 1} T_k(n) X^n$.
Since $0 \leq T_k (n) \leq 1$ for all $n$, it follows that
the series defining $h(X)$ is convergent for $|X| < 1$.  Then
$$ \rho_k := \lim_{n \rightarrow \infty} T_k(n) = 2 - (k+1) h(1/k) ,$$
where $\rho_k$ is the limiting frequency of words having no
nontrivial palindromic prefix.

Using \eqref{p3}, we now get
\begin{align*}
h(X) (1-X) &= T_k(1) X + \sum_{n \geq 2} (T_k(n)-T_k(n-1)) X^n \\
&= T_k(1) X + \left( \sum_{n \geq 1} (T_k(2n)-T_k(2n-1)) X^{2n} \right) +
	\left( \sum_{n \geq 1} (T_k(2n+1) - T_k(2n)) X^{2n+1} \right) \\
&= X - \left( \sum_{n \geq 1} T_k(n) k^{-n} X^{2n} \right) -
	\left( \sum_{n \geq 1} T_k(n+1) k^{-n} X^{2n+1} \right) \\
&= X - h(X^2/k) - {k \over X} \left(f (X^2/k) - {{X^2}\over k}\right) \\
&= 2X - \left( 1 + {k \over X} \right) h(X^2/k), \\
\end{align*}
and so we get a functional equation for $h(X)$:
$$ h(X) = {{2X} \over {1-X}} + {{X+k} \over {X(X-1)}} h(X^2/k).$$

By iterating this functional equation, and using the fact
that $h(\epsilon) \sim \epsilon$ for small real $\epsilon$,
we get an expression for
$h(1/k)$:
$$ \left( \lim_{n \rightarrow \infty} {{ \prod_{i=1}^n (k^{2^i} + 1) } \over
	{k^{n+1} \prod_{i=1}^n (k^{2^i - 1} - 1) }} \right)
- 2 \sum_{n \geq 1} k^{2^{2n-1} - 2n} (k^{2^{2n-1} - 1} + 1) 
	{{ \prod_{i=1}^{2n-2}  (k^{2^i} + 1) } \over
	{ \prod_{i=1}^{2n}  (k^{2^i - 1} - 1) }} .
$$
This is very rapidly converging; for $k = 3$ only $6$ terms are enough
to get 60 decimal places of $h(1/k)$:
\begin{align*}
h(1/3) &= 0.430377520029471213293382335121830467895548542549528870740458 \cdots \\
\rho_3 &= 0.27848991988211514682647065951267812841780582980188451703816 \cdots
\end{align*}

\section{Square prefixes}

It is natural to conjecture that our bijections connecting
words with no border and no
even palindromic prefix might also apply to words having no square
prefix.  However, this is not the case.  Let
$s_k (n)$ denote the number of length-$n$ words over 
$\Sigma_k$ having no square prefix.
When $k = 2$,
for example, the two sequences $s_k (n)$ and $v_k(n)$ differ
for the first time at $n = 10$, as the following table indicates.
\begin{table}[H]
\begin{center}
\begin{tabular}{c|cccccccccccc}
$n$ & 1 & 2 & 3 & 4 & 5 & 6 & 7 & 8 & 9 & 10 & 11 & 12 \\
\hline
$v_2 (n)$ &   2&   2&   4&   6&  12&  20&  40&  74& 148& 284& 568&1116\\
$s_2 (n)$ &  2&   2&   4&   6&  12&  20&  40&  74& 148& 286& 572&1124\\
\end{tabular}
\end{center}
\end{table}
\noindent The sequence $s_2(n)$ is sequence
\seqnum{A122536} in the OEIS.

Chaffin, Linderman, Sloane, and Wilks
\cite[\S 3.7]{Chaffin&Linderman&Sloane&Wilks:2013}
conjectured that $s_2 (n) \sim \alpha_2 \cdot 2^n$ for a constant
$\alpha_2 \doteq 0.27$.  In this section we prove this conjecture in more
generality.

\begin{theorem}
The limit $\lim_{n \rightarrow \infty} s_k (n)/k^n$ exists and equals
a constant $\alpha_k$ with $\alpha_k > 1-1/(k-1)$.
\end{theorem}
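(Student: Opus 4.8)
The plan is to mimic the recurrence strategy used above for $A_k(n)$, but to extract only the qualitative information (monotonicity plus a crude bound) that the statement needs, rather than an exact evaluation. The starting point is two recurrences relating $s_k$ at consecutive lengths. The key elementary observation is that appending a single letter to a word $v$ can create a new square prefix only of length exactly $|v|+1$, since every shorter prefix of $va$ is already a prefix of $v$. Because squares have even length, passing from an even length $2n$ to $2n+1$ can therefore never introduce a square prefix, giving $s_k(2n+1) = k\,s_k(2n)$ for $n \ge 1$. Passing from $2n-1$ to $2n$, the only possible new square prefix is the whole word, so
\[
s_k(2n) = k\,s_k(2n-1) - b_k(n),
\]
where $b_k(n)$ is the number of length-$2n$ squares $uu$ (with $|u|=n$) having no proper square prefix, equivalently those length-$2n$ squares whose length-$(2n-1)$ prefix has no square prefix.

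Next I would pass to the normalized quantities $P_k(n) := s_k(n)/k^n$, exactly as $T_k$ was used above. The two recurrences become $P_k(2n+1) = P_k(2n)$ and
\[
P_k(2n) = P_k(2n-1) - b_k(n)k^{-2n} = P_k(2n-2) - b_k(n)k^{-2n}.
\]
Since $b_k(n) \ge 0$, the subsequence $(P_k(2n))_{n \ge 0}$ is nonincreasing and bounded below by $0$, hence convergent; and since $P_k(2n+1)=P_k(2n)$, the full sequence $(P_k(n))$ converges to the same limit. This establishes the existence of $\alpha_k = \lim_n s_k(n)/k^n$, and telescoping from $P_k(0)=1$ yields the closed form $\alpha_k = 1 - \sum_{n \ge 1} b_k(n)k^{-2n}$.

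For the lower bound I need only a crude estimate of $b_k(n)$. There are exactly $k^n$ squares of length $2n$, one for each first half, so $b_k(n) \le k^n$; moreover this is strict whenever $n \ge 2$, since the constant word $a^{2n}$ has the proper square prefix $aa$ and so is not counted by $b_k(n)$. Hence
\[
\sum_{n \ge 1} b_k(n)k^{-2n} < \sum_{n \ge 1} k^{n}k^{-2n} = \sum_{n \ge 1} k^{-n} = \frac{1}{k-1},
\]
the inequality being strict because of the $n \ge 2$ terms. Therefore $\alpha_k = 1 - \sum_{n\ge1} b_k(n)k^{-2n} > 1 - 1/(k-1)$, as claimed.

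The point I would take most care over is the bookkeeping in the even recurrence: verifying that every length-$2n$ word with no square prefix arises uniquely as $va$ with $v$ of length $2n-1$ having no square prefix and $va$ not a square, and that the excluded words are precisely the squares counted by $b_k(n)$. This rests entirely on the length argument that no shorter new square prefix can appear upon appending one letter, so — in contrast to the palindromic case — no Fine--Wilf or Lyndon--Sch\"utzenberger input is required here. Computing $b_k(n)$ exactly would genuinely be subtler (indeed $b_k(n) \ne s_k(n)$ in general, for instance $b_2(5)=10$ while $s_2(5)=12$), but fortunately the crude bound $b_k(n)\le k^n$ is all that the theorem demands.
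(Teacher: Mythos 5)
Your proposal is correct and is in substance the paper's own proof: your $b_k(n)$ is exactly the paper's $c_k(n)$ (length-$2n$ squares with no nonempty proper square prefix), your telescoped recurrence yields precisely the paper's identity $k^n - s_k(n) = \sum_{2i \le n} c_k(i)\,k^{n-2i}$ (which the paper gets in one step by classifying words with a square prefix according to their \emph{shortest} square prefix), and both arguments conclude with the same comparison $c_k(n) \le k^n$ against $\sum_{n \ge 1} k^{-n} = 1/(k-1)$. The differences are only in packaging — you build the identity one letter at a time (in the style of the paper's $A_k(n)$ recurrences) and get existence of the limit from monotonicity rather than from convergence of the series whose partial sums are $d_k(n)/k^n$ — and if anything your explicit strictness argument ($b_k(n) < k^n$ for $n \ge 2$, via the constant words $a^{2n}$) is slightly more careful than the paper's, which leaves the strictness of $\beta_k < 1/(k-1)$ implicit in the value $c_k(2) = k(k-1)$.
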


\begin{proof}
Let $d_k(n) = k^n - s_k(n)$ be the number of length-$n$ words over
$\Sigma_k$ having a nonempty square
prefix, and let
$c_k(n)$ be the number of squares of length $2n$ over $\Sigma_k$
having no nonempty
proper square prefix.  
Hence $c_k(1) = k$ and $c_k(2) = k(k-1)$.  

The first few values of $c_2 (n)$ and $d_2 (n)$ are given in the
following table.
\begin{table}[H]
\begin{center}
\begin{tabular}{c|cccccccccccc}
$n$ & 1 & 2 & 3 & 4 & 5 & 6 & 7 & 8 & 9 & 10 & 11 & 12 \\
\hline
$c_2 (n)$ &    2&   2&   4&   6&  10&  20&  36&  72& 142& 280& 560&1114\\
$d_2 (n)$ &   0&   2&   4&  10&  20&  44&  88& 182& 364& 738&1476&2972\\
\end{tabular}
\end{center}
\end{table}
\noindent The sequence $c_2 (n)$ is sequence \seqnum{A216958} in the OEIS, and
the sequence $d_2 (n)$ is sequence \seqnum{A121880}.

Let $w$ be a word of length $n$.  Either its shortest square prefix is of
length 2 (and there are $c_k(1) k^{n-2}$ such words), or of length 4
(and there are $c_k(2) k^{n-4}$ such words), and so forth.

So $d_k(n)$, the number of words of length $n$ having a nonempty
square prefix,
is exactly
$\sum_{2i \leq n} c_k(i) k^{n-2i}$.
Hence $d_k(n)/k^n = \sum_{2i \leq n} c_k(i) k^{-2i}$.  Thus
$\lim_{n \rightarrow \infty} d_k(n)/k^n$ exists
iff the infinite sum
$\sum_{i = 1}^\infty c_k(i) k^{-2i}$ converges.
But, since $c_k (i) \leq k^i$, this sum converges to some constant
$\beta_k< 1/(k-1) $, by comparison with
the sum $\sum_{i=1}^\infty k^{-i} = 1/(k-1)$.    
It follows that $s_k (n) = k^n - d_k (n) \sim (1-\beta_k) k^n$.
Letting $\alpha_k = 1-\beta_k$, the result follows.
\end{proof}

To estimate the value of $\beta_k$ (and hence $\alpha_k$) we use the
inequalities
$$ \sum_{i=1}^n c_k (i) k^{-2i} \leq \beta_k \leq
\left( \sum_{i=1}^n c_k (i) k^{-2i} \right) + \sum_{i=n+1}^\infty k^{-i}
= \left( \sum_{i=1}^n c_k (i) k^{-2i} \right) + (k^{-n})/(k-1) .$$
For example, if we take $n = 20$, we get
$\beta_2 \in (0.7299563,0.7299574)$ and hence
$\alpha_2 \in (0.2700426, 0.2700437)$.
This can be compared to the analogous constant
$\gamma_2 \doteq 0.2677868$ for even palindromes.

\subsection*{Acknowledgments}

We are grateful to the referees for their helpful comments.

\newcommand{\noopsort}[1]{} \newcommand{\singleletter}[1]{#1}

\end{document}